\title{Conformal Bayesian Computation}
\author{%
  Edwin Fong \\
  University of Oxford\\
  The Alan Turing Institute\\
  \texttt{edwin.fong@stats.ox.ac.uk} \\
   \And
  Chris Holmes \\
  University of Oxford\\
  The Alan Turing Institute\\
  \texttt{cholmes@stats.ox.ac.uk} \\
}
\def\T{{ \mathrm{\scriptscriptstyle T} }}
\newtheorem{proposition}{Proposition}
\newcommand{\iid}{\overset{\mathrm{iid}}{\sim}}
\begin{document}

\maketitle
\begin{abstract}
We develop scalable methods for producing conformal Bayesian predictive intervals with finite sample calibration guarantees. Bayesian posterior predictive distributions, $p(y \mid x)$,  characterize subjective beliefs on outcomes of interest, $y$, conditional on predictors, $x$. Bayesian prediction is well-calibrated when the model is true, but the predictive intervals may exhibit poor empirical coverage when the  model is misspecified, under the so called ${\cal{M}}$-open perspective. In contrast, conformal inference provides finite sample frequentist guarantees on predictive confidence intervals without the requirement of model fidelity. Using `add-one-in' importance sampling, we show that conformal Bayesian predictive intervals are efficiently obtained from re-weighted posterior samples of model parameters. Our approach contrasts with existing conformal methods that require expensive refitting of models or data-splitting to achieve computational efficiency. We demonstrate the utility on a range of examples including extensions to partially exchangeable settings such as hierarchical models.
\end{abstract}
\section{Introduction}

We consider Bayesian prediction using training data  $Z_{1:n} = \{Y_i, X_i\}_{i=1:n}$ for an outcome of interest  $Y_i$  and covariates $X_i \in \mathbb{R}^d$. Given a model likelihood $f_\theta(y \mid x)$ and prior on parameters, $\pi(\theta)$ for $\theta \in \mathbb{R}^p$, the posterior predictive distribution for the response at a new $X_{n+1} = x_{n+1}$ takes on the form
\begin{equation} \label{eq:post_pred}
p(y \mid x_{n+1},Z_{1:n}) = \int f_\theta(y \mid x_{n+1})\, \pi(\theta \mid Z_{1:n}) \, d\theta  \, , 
\end{equation}
where $\pi(\theta \mid Z_{1:n})$ is the Bayesian posterior. Asymptotically exact samples from the posterior can be obtained through Markov chain Monte Carlo (MCMC) and the above density can be computed through Monte Carlo (MC), or by direct sampling from an approximate model. 
Given a Bayesian predictive distribution, one can then construct the highest density $100 \times (1-\alpha)\%$ posterior predictive credible intervals, which  are the shortest intervals to contain  $(1-\alpha)$ of the predictive probability. Alternatively, the central $100\times(1-\alpha)\%$ credible interval can be computed using the ${\alpha}/{2}$ and $1-{\alpha}/{2}$ quantiles. Posterior predictive distributions condition on the observed $Z_{1:n}$ and represent subjective and coherent beliefs. 
However, it is well known that model misspecification can lead Bayesian intervals to be poorly \textit{calibrated} in the frequentist sense \citep{Dawid1982,Fraser2011}, that is the long run proportion of the observed data lying in the $(1 - \alpha)$ Bayes predictive interval is not necessarily equal to  $(1-\alpha)$. This has consequences for the robustness of such approaches and trust in using Bayesian models to aid decisions.

Alternatively, one can seek 
intervals around a point prediction from the model, $\widehat{y} = \widehat{\mu}(x)$, that have the correct frequentist coverage of $(1-\alpha)$ . This is precisely what is offered by the \textit{conformal prediction} framework of \citet{Vovk2005}, which allows the construction of prediction bands with finite sample validity without assumptions on the generative model beyond exchangeability of the data. Formally, for $Z_i = \{Y_i,X_i\}_{i=1:n}$, $Z_i \iid \mathbb{P}$ and miscoverage level $\alpha$, conformal inference allows us to construct a confidence set $C_\alpha(X_{n+1})$ from $Z_{1:n}$ and $X_{n+1}$ such that 
\begin{equation} \label{eq:coverage}
\mathbb{P}(Y_{n+1} \in C_\alpha(X_{n+1})) \geq 1-\alpha
\end{equation}
noting that $\mathbb{P}$ is over $Z_{1:n+1}$. In this paper we develop computationally efficient conformal inference methods for Bayesian models including extensions to hierarchical settings. A general theme of our work is that, somewhat counter-intuitively, Bayesian models are well suited for the conformal method.

Conformal inference for calibrating Bayesian models was previously suggested in \cite{Melluish2001}, \cite{Vovk2005}, \cite{Wasserman2011} and \cite{Burnaev2014}, where it is referred to as ``de-Bayesing'', ``frequentizing'' and ``conformalizing'', but only in the context of conjugate models.  Here, we present a scalable MC method for \textit{conformal Bayes},  implementing full conformal Bayesian prediction using an `add-one-in' importance sampling algorithm. The automated method can construct conformal predictive intervals from any Bayesian model given only samples of model parameter values from the posterior $ \theta \sim \pi(\theta \mid Z_{1:n})$, up to MC error. Such samples are readily available in most Bayesian analyses from probabilistic programming languages such as Stan \citep{Carpenter2017} and PyMC3 \citep{Salvatier2016}. We also extend conformal inference to partially exchangeable settings which utilize the important class of Bayesian hierarchical models, and note the connection to Mondrian conformal prediction \citep[Chapter~4.5]{Vovk2005}. Previously, the extension of conformal prediction to random effects was introduced in \cite{Dunn2020} in a non-Bayesian setting, with a focus on prediction in new groups, as well as within-group predictions without covariates. We will see that the Bayesian hierarchical model allows for a natural sharing of information between groups for within-group predictions with covariates. We discuss the motivation behind using the Bayesian posterior predictive density as the conformity measure for both the Bayesian and the frequentist, and demonstrate the benefits in a number of examples.

\subsection{Background}
The conformal inference framework was first introduced by \cite{Gammerman1998}, followed by the thorough book of \cite{Vovk2005}.  Full conformal prediction is computationally expensive, requiring the whole model to be retrained at each test covariate $x_{n+1}$ {\em{and}} for each value in a reference grid of potential outcomes, e.g. $y \in \mathbb{R}$ for regression.  This makes the task computationally infeasible beyond a few special cases where we can shortcut the evaluation along the outcome reference grid, e.g. ridge regression \citep{Vovk2005,Burnaev2014} and lasso \citep{Lei2019}. Shrinking the search grid is possible, but still requires many refittings of the model \citep{Chen2016}. The split conformal prediction method \citep{Lei2018} is a useful alternative method which only requires a single model fit, but increases variability by dividing the data into a training and test set that includes randomness in the choice of the split, and has a tendency for  wider intervals. Methods based on cross-validation such as cross-conformal prediction \citep{Vovk2015} and the jacknife+ \citep{Barber2021} lie in between the split and full conformal method in terms of computation. A detailed discussion of computational costs of various conformal methods are provided in \citet[Section 4]{Barber2021}. A review of recent advances in conformal prediction is given in \cite{Zeni2020}, and interesting extensions have been developed by works such as \citet{Tibshirani2019, Romano2019, Candes2021}.

\section{Conformal Bayes}
\subsection{Full Conformal Prediction}
We begin by summarizing the full conformal prediction algorithm discussed in \cite{Vovk2005,Lei2018}. Firstly, a conformity (goodness-of-fit) measure,
\begin{equation*}
\sigma_i:=\sigma(Z_{1:n+1}; Z_i),
\end{equation*}
takes as input a set of data points $Z_{1:n+1}$, and computes how similar the data point $Z_i$ is for $i = 1,\ldots, n+1$. A typical conformity measure for regression would be the negative squared error arising from a point prediction $-\left\{y_i - \widehat{\mu}(x_i)\right\}^2$, where $\widehat{\mu}(x)$ is the point predictor fit to the augmented dataset $Z_{1:n+1}$, assumed to be symmetric with respect to the permutation of the input dataset. The key property of any conformity measure is that it is exchangeable in the first argument, i.e. the conformity measure for $Z_i$ is invariant to the permutation of $Z_{1:n+1}$. Under the assumption that $Z_{1:n+1}$ is exchangeable, we then have that $\sigma_{1:n+1}$ is also exchangeable, and its rank is uniform among $\{1,\ldots,n+1\}$ (assuming continuous $\sigma_{1:n+1}$). From this, we have that the rank of $\sigma_{n+1}$ is a valid $p$-value. If we now consider a plug-in value  $Y_{n+1} = y$ (where $X_{n+1}$ is known), we can denote the rank of $\sigma_{n+1}$ among $\sigma_{1:n+1}$ as 
\begin{equation*}
\pi(y) = \frac{1}{n+1}\sum_{i=1}^{n+1} \mathbbm{1}\left(\sigma_i \leq \sigma_{n+1} \right).
\end{equation*} For miscoverage level $\alpha$, the full conformal predictive set,
\begin{equation}\label{eq:conformal_set}
C_\alpha(X_{n+1}) = \{y\in \mathbb{R}: \pi(y)>\alpha\},
\end{equation}
satisfies the desired frequentist coverage as in  (\ref{eq:coverage}).  {Intuitively, we are reporting the values of $y$ which conform better than the fraction $\alpha$ of observed conformity scores in the augmented dataset.} A formal proof can be found in \citet[Chapter~8.7]{Vovk2005}. For continuous $\sigma_{1:n+1}$, we also have from \citet[Theorem~1]{Lei2018} that the conformal predictive set does not significantly over-cover.

In practice, beyond a few exceptions, the function $\pi(y)$ must be computed on a fine grid $y\in \mathcal{Y}_{\text{grid}}$, for example of size 100, in which case   the model must be retrained 100 times to the augmented dataset to compute $\sigma_{1:n+1}$, with plug-in values for $y_{n+1}$ on the grid. This is illustrated in Algorithm \ref{alg:full_conformal} below. We note here that the grid method only provides approximate coverage, as $y$ may be selected even if it lies between two grid points that are not selected. This is formalized in \cite{Chen2018}, but we do not discuss this further. In the Appendix, we provide an empirical comparison of the grid effects. This is also valid for binary classification where we now have a finite $\mathcal{Y}_{\text{grid}} = \{0,1\}$, and so the grid method for full conformal prediction is exact and feasible.

\begin{center}
\begin{algorithm}[H]\label{alg:full_conformal}
\DontPrintSemicolon
  \SetAlgoLined
  Observed data is $Z_{1:n},X_{n+1}$; Specify miscoverge level $\alpha$\;
  \For{\textnormal{each} $y \in \mathcal{Y}_{\textnormal{grid}}$}{
    Fit model to augmented dataset $\{Z_{1},\ldots,Z_n,\{y,X_{n+1}\}\}$\;
    Compute $\sigma_{1:n}$ and $\sigma_{n+1} $\;
    Store the rank, $\pi(y)$ , of $\sigma_{n+1}$ among $\sigma_{1:n+1}$\;
    }
 Return the set $C_\alpha(X_{n+1}) = \{y\in \mathcal{Y}_{\text{grid}}: \pi(y) > \alpha\}$.\;
 \vspace{5mm}
\caption{Full Conformal Prediction}
\end{algorithm}
\end{center}

\subsection{Conformal Bayes and Add-One-In Importance Sampling}
In a Bayesian model, a natural suggestion for the conformity score, as noted in \cite{Vovk2005,Wasserman2011}, is the posterior predictive density (\ref{eq:post_pred}), that is
\begin{equation*}\label{eq:conformity_bayes}
\sigma(Z_{1:n+1}; Z_i) =  p(Y_i \mid X_i,Z_{1:n+1}).
\end{equation*}
This is a valid conformity score, as we have
$
\pi(\theta \mid Z_{1:n+1}) \propto \pi(\theta) \prod_{i=1}^{n+1} f_\theta(Y_i \mid X_i),
$ and so $\sigma$ is indeed invariant to the permutation of $Z_{1:n+1}$. We denote this method as \textit{conformal Bayes} (CB), and we will see shortly that the exchangeability structure of Bayesian models is key to constructing conformity scores in the partial exchangeability scenario.

Beyond conjugate models, we are usually able to obtain (asymptotically exact) posterior samples $\theta^{(1:T)} \sim \pi(\theta \mid Z_{1:n})$, e.g through MCMC, where $T$ is a large integer. Such samples are typically available as standard output from Bayesian model fitting. The posterior predictive can then be computed up to Monte Carlo error through
\begin{equation*}
\widehat{p}(Y_{n+1} \mid X_{n+1},Z_{1:n}) = \frac{1}{T}\sum_{t = 1}^T f_{\theta^{(t)}}(Y_{n+1} \mid X_{n+1}).
\end{equation*}

The key insight is that refitting the Bayesian model with $\{Z_{1},\ldots,Z_n,\{y,X_{n+1}\}\}$ is well approximated through importance sampling (IS), as only $\{y,X_{n+1}\}$ changes between refits. This leads immediately to an IS approach to full conformal Bayes, where we just need to compute `add-one-in' (AOI) predictive densities. Here AOI refers to the inclusion of $\{Y_{n+1}, X_{n+1}\}$ into the training set, named in relation to  `leave-one-out' (LOO) cross-validation. Specifically, for $Y_{n+1} = y$ and  $\theta^{(1:T)} \sim \pi(\theta \mid Z_{1:n})$, we can compute
\begin{equation}\label{eq:IS_sum}
\widehat{p}(Y_i \mid X_i,Z_{1:n+1}) =\sum_{t=1}^T \widetilde{w}^{(t)}f_{\theta^{(t)}}(Y_i \mid X_i)
\end{equation}
where $\widetilde{w}^{(t)}$ are our self-normalized importance weights of the form
\begin{equation}\label{eq:IS_weights}
\begin{aligned}
w^{(t)} &= f_{\theta^{(t)}}(y \mid X_{n+1}), \quad
\widetilde{w}^{(t)} &= \frac{w^{(t)}}{\sum_{t'=1}^T w^{(t')}} \cdot
\end{aligned}
\end{equation}
We see that the unnormalized importance weights have the intuitive form of the predictive likelihood at the reference point $\{y,X_{n+1}\}$ given the model parameters $\theta^{(t)}$.

The use of AOI importance sampling has similarities to  the computation of Bayesian leave-one-out cross-validation (LOOCV) predictive densities \citep{Vehtari2017}, which is also used in accounting for model misspecification. 
An interesting aspect of AOI in comparison with LOO is that AOI predictive densities are less vulnerable to importance weight instability for the following reasons:
\begin{itemize}
    \item In LOOCV, the target $\pi(\theta \mid Z_{-i})$ generally has thinner tails than the proposal $\pi(\theta \mid Z_{1:n})$, leading to importance weight instability. In contrast, AOI uses the posterior $\pi(\theta \mid Z_{1:n})$ as a proposal for the thinner-tailed $\pi(\theta \mid Z_{1:n+1})$. For LOOCV the importance weights are proportional to $1/f_\theta(y \mid x)$, in contrast to the typically bounded $f_{\theta}(y \mid x)$ for AOI.
    \item For AOI, we are predicting $Z_i$ given $Z_{1:n+1}$ which is always in-sample unlike in LOOCV where the datum is out-of-sample, so we can expect greater stability with AOI. 
    \item The IS weight stability is governed by $Y_{n+1} = y$, which is not random as we select it for the grid. For sufficiently large $\alpha$, we will not need to compute the AOI predictive density for extreme values of $y$. 
    \end{itemize}
    
    We provide some IS weight diagnostics in the experiments and find that they are stable. In difficult settings such as very high-dimensions, one can make use of the recommendations of \citet{Vehtari2015} for assessing and Pareto-smoothing the importance weights if necessary.

\subsection{Computational complexity}

Given the posterior samples,
we must compute the likelihood for each $\theta^{(t)}$ at $Z_{1:n}$, as well at $\{y,X_{n+1}\}$ for $y\in \mathcal{Y}_{\text{grid}}$. The additional computation required for CB for each $X_{n+1}$ is thus $T \times (n  + n_{\text{grid}})$ likelihood evaluations, which is relatively cheap. This is then followed by the dot product of an $(n+1) \times T$ matrix with a $T$ vector for each $y$, which is $\mathcal{O}(nT)$, so the overall  complexity is $\mathcal{O}(n_{\text{grid}}Tn)$. The values $n_{\text{grid}}$ and $T$ are constants, though we may want to increase $T$ with the dimensionality of the model to reduce importance sampling variance. The large matrices involved in computing the AOI predictives suggests we can take advantage of GPU computation, and machine learning packages such as JAX \citep{Jax2018} are highly suitable for this application.

\subsection{Motivation}\label{sec:motivation}
Much has been written on the contrasting foundations and interpretation of Bayes versus frequentist measures of uncertainty \citep{Little2006,Shafer2008,Bernardo2009, Wasserman2011}, and we provide a summary in the Appendix. Here we motivate CB predictive intervals from both a Bayesian and frequentist perspective.

The pragmatic Bayesian, aware of the potential for model misspecification in either the prior or likelihood, may be interested in conformal inference as a countermeasure. CB predictive intervals with guaranteed frequentist coverage can be provided as a supplement to the usual Bayesian predictive intervals. The difference between the Bayesian and conformal interval may also serve as an informal diagnostic for model evaluation (e.g. \cite{Gelman2013}). Posterior samples through MCMC or direct sampling are typically available, and so CB through automated AOI carries little overhead. 

The frequentist may also wish to use a Bayesian model as a tool for constructing predictive confidence intervals. Firstly, the likelihood can take into account skewness, heteroscedasticity unlike the usual residual conformity score.  Secondly, features such as sparsity, support, and regularization can be  included through priors, while CB ensures correct coverage. Finally, a subtle issue that arises in full conformal prediction is that we lose validity if hyperparameter selection is not symmetric with respect to $Z_{n+1}$, e.g. if we estimate the lasso penalty $\lambda$ using only $Z_{1:n}$ before computing the full conformal intervals with said $\lambda(Z_{1:n})$. For CB, a prior on hyperparameters induces weighting of the hyperparameter values by implicit cross-validation for each refit \citep{Gneiting2007, Fong2020}. We highlight here that this issue does not affect the split conformal method.

\section{Partial Exchangeability and Hierarchical Models}

A setting of particular interest is for grouped data, which corresponds to a weakening of exchangeability often denoted as partial exchangeability \citep[Chapter~4.6]{Bernardo2009}. Assume that we observe data from $J$ groups, each of size $n_j$, where again $Z_{i,j} = \{Y_{i,j},X_{i,j}\}$. We denote the full dataset as $Z= \{Z_{i,j}: i = 1,\ldots, n_j,  j = 1,\ldots, J\}$. We may not expect the entire sequence $Z$ to be exchangeable, instead only that data points are exchangeable within groups. Formally, this means that
\begin{equation}\label{eq:part_exch}
p(Z_{1:n_1,1},\ldots, Z_{1:n_J,J})=p(Z_{\pi_1(1):\pi_1(n_1), 1}, \ldots,Z_{\pi_J(1):\pi_J(n_J), J})
\end{equation}
for any permutations $\pi_j$ of $1,\ldots,n_j$, for $j = 1,\ldots,J$. Alternatively, we can enforce the usual definition of exchangeability but only consider permutations $\pi$ of $1,\ldots,n$ such that the groupings are preserved. A simple example of this partial exchangeability is if $Z_{i,j} \iid P_j$ for $i =1 ,\ldots,n_j,\, j =1,\ldots,J$, where $P_j$ can now be distinct.

Partial exchangeability is useful in  multilevel modelling, e.g. where $Z_{1:n_j,j}$ records exam results on students within school $j$, for schools $j = 1,\ldots,J$. Students may be deemed exchangeable within schools, but not between schools. Further examples may be found in \citet{Gelman2006}.

\subsection{Group Conformal Prediction}\label{sec:group_conf}

Given a new $X_{n_j+1,j}$ belonging to group $j$ for $j \in \{1,\ldots,J\}$, we seek to construct a $(1-\alpha_j)$ confidence interval for $Y_{n_j+1,j}$. We define a within-group conformity score as 
\begin{equation*}
\sigma_{i,j} := \sigma_{Z_{-j}}(Z_{1:n_j+1,j}; Z_{i,j})
\end{equation*}
for $i = 1,\ldots, n_j+1$. We denote $Z_{-j}$ as the dataset without group $j$, and the subscript indicates the dependence of the conformity score on this, which we motivate in the next subsection. For each $Z_{-j}$, we require the score to be invariant with respect to the permutation of $Z_{1:n_j+1,j}$. For $Z_{n_j+1,j} = \{y,X_{n_j+1,j}\}$, the conformal predictive set is then defined 
\begin{equation}\label{eq:conformal_set_group}
\begin{aligned}
\pi_j(y) = \frac{1}{n_j+1}\sum_{i=1}^{n_j+1} \mathbbm{1}\left(\sigma_{i,j} \leq \sigma_{n_j+1,j} \right),\quad C_{\alpha_j}\left(X_{n_j+1,j}\right) = \{y \in \mathbb{R}: \pi_j(y) > \alpha_j)
\end{aligned}
\end{equation}
In other words, we rank the conformity scores $\sigma_{1:n_j+1,j}$ within the group $j$, and compute the conformal interval as usual with Algorithm \ref{alg:full_conformal}. The interval is valid from the following.

\begin{proposition}
Assume that $\{Z, Z_{n_j+1,j}\}$ is partially exchangeable as in (\ref{eq:part_exch}), and the conformity measure $\sigma_{i,j}$ for group $j$ is invariant to the permutation of $Z_{1:n_j+1,j}$. We then have
\begin{equation*}
\mathbb{P}\left(Y_{n_j+1,j} \in C_{\alpha_j}\left(X_{n_j+1,j} \right) \right)\geq 1-\alpha_j
\end{equation*}
where $ C_{\alpha_j}\left(X_{n_j+1,j} \right)$ is defined in (\ref{eq:conformal_set_group}), and $\mathbb{P}$ is over $\{Z,Z_{n_j+1,j}\}$.
\end{proposition}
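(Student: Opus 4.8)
The plan is to reduce the partially exchangeable statement to the ordinary full conformal argument of \citet[Chapter~8.7]{Vovk2005} by conditioning on $Z_{-j}$. Fix the group $j$ and condition on $Z_{-j}$. Conditionally on $Z_{-j}$, the map $Z_{1:n_j+1,j}\mapsto \sigma_{Z_{-j}}(Z_{1:n_j+1,j};Z_{i,j})$ is a fixed deterministic function which, by hypothesis, is invariant to permutations of the $n_j+1$ points in its first argument; hence all the randomness remaining in $\sigma_{1:n_j+1,j}$ is carried by $Z_{1:n_j+1,j}$, and the extra dependence of the score on $Z_{-j}$ is harmless precisely because $Z_{-j}$ is held fixed.

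The key step is to show that, conditionally on $Z_{-j}$, the augmented within-group sample $Z_{1:n_j+1,j}$ is exchangeable. This follows directly from the partial exchangeability assumption (\ref{eq:part_exch}) applied to $\{Z,Z_{n_j+1,j}\}$: the joint law is invariant under any permutation $\pi_j$ of the indices $1,\dots,n_j+1$ within group $j$ while the other groups are left untouched, so the conditional law of $Z_{1:n_j+1,j}$ given $Z_{-j}$ inherits the same invariance. I expect this to be the only place requiring care, and the care is purely bookkeeping: one must check that the within-group permutation and the conditioning on $Z_{-j}$ commute, which they do because the permutation acts trivially on the conditioning variables.

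Combining the two observations, the scores $\sigma_{1:n_j+1,j}$ are exchangeable conditionally on $Z_{-j}$, so the usual uniformity-of-ranks argument for exchangeable scores applies verbatim inside group $j$: the rank of $\sigma_{n_j+1,j}$ among $\sigma_{1:n_j+1,j}$ is (sub-)uniform on $\{1,\dots,n_j+1\}$, exactly uniform when the scores are almost surely distinct and otherwise only more favourable since the ``$\le$'' in the definition of $\pi_j$ can only enlarge $\pi_j$. This gives
\[
\mathbb{P}\left(\pi_j(Y_{n_j+1,j})\le\alpha_j \mid Z_{-j}\right)\;\le\;\frac{\lfloor \alpha_j(n_j+1)\rfloor}{n_j+1}\;\le\;\alpha_j ,
\]
hence $\mathbb{P}(Y_{n_j+1,j}\in C_{\alpha_j}(X_{n_j+1,j})\mid Z_{-j})=\mathbb{P}(\pi_j(Y_{n_j+1,j})>\alpha_j\mid Z_{-j})\ge 1-\alpha_j$, and taking the expectation over $Z_{-j}$ yields the unconditional bound claimed. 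It is worth remarking that this is exactly the Mondrian conformal construction \citep[Chapter~4.5]{Vovk2005} with the groups playing the role of the fixed taxonomy, and that the freedom to let $\sigma_{i,j}$ depend on $Z_{-j}$ is what will be exploited in the next subsection to borrow strength across groups through a hierarchical model while retaining validity.
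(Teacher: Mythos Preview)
Your proof is correct and follows essentially the same approach as the paper: condition on $Z_{-j}$, use partial exchangeability to conclude that $Z_{1:n_j+1,j}$ (and hence the scores) remain exchangeable, apply the standard conformal coverage bound conditionally, and then average out $Z_{-j}$. You supply more detail than the paper---the explicit rank inequality and the treatment of ties---but the structure is identical.
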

\begin{proof}
Conditional on $Z_{-j}$, the observations $Z_{1:n_j +1,j}$ are still exchangeable, and thus so are $\sigma_{1:n_j+1,j}$ from the invariance of the conformity measure. The usual conformal guarantee then holds:
\begin{equation*}
\mathbb{P}\left(Y_{n_j+1,j} \in C_{\alpha_j}\left(X_{n_j+1,j} \right) \mid Z_{-j} \right)\geq 1-\alpha_j.
\end{equation*}

Taking the expectation with respect to $Z_{-j}$ gives us the result.
\end{proof}

It is interesting to note that the above group conformal predictor coincides with the attribute-conditional Mondrian conformal predictor of \citet[Chapter~4.5]{Vovk2005}, with the group allocations as the taxonomy. Validity under the relaxed Mondrian-exchangeability of \citet[Chapter~8.4]{Vovk2005} is key for us here.

\subsection{Conformal Hierarchical Bayes}
Under this setting, a hierarchical Bayesian model can be defined of the form
\begin{equation*}
    \begin{aligned}
    [Y_{i,j} \mid X_{i,j},\theta_j,\tau] &\iid f_{\theta_j,\tau}(\cdot \mid X_{i,j}) \quad i = 1,\ldots, n_j, \quad &j = 1,\ldots, J \\
    [\theta_j \mid \phi]&\iid \pi(\cdot \mid \phi) \quad &j = 1,\ldots, J\\
    \phi \sim \pi(\phi), &\quad \tau \sim \pi(\tau).
    \end{aligned}
\end{equation*}
Here $\tau$ is a common parameter across groups (e.g. a common standard deviation for the residuals under homoscedastic errors).  The desired partial exchangeability structure is clearly preserved in the Bayesian model \citep{Bernardo1996}. De Finetti representation theorems are also available for partially exchangeable sequences (when defined in a slightly different manner to the above), which motivate the specification of hierarchical Bayesian models \citep[Chapter~4.6]{Bernardo2009}.

The posterior predictive is once again a natural choice for the conformity measure. Denoting $\bar{Z}_y$ as the entire dataset augmented with $Z_{n_j+1,j} = \{y,X_{n_j+1,j}\}$, we have
\begin{equation}\label{eq:group_posterior_pred}
\sigma_{i,j} = p(Y_{i,j} \mid X_{i,j}, \bar{Z}_y) = \int f_{\theta_j,\tau}(Y_{i,j} \mid X_{i,j})\,  \pi(\theta_j,\tau \mid \bar{Z}_y) \, d\theta_j \, d\tau 
\end{equation}
for $i = 1,\ldots,n_j+1$. The within-group permutation invariance follows as the likelihood is exchangeable within groups, and thus so is the posterior and resulting posterior predictive. Practically, this structure allows for independent coefficients $\theta_j$ for each group, but partial pooling through $\pi( \theta \mid \phi)$ allows information to be shared between groups. A fully pooled model, whilst still valid, is usually too simple and predicts poorly, whereas a no-pooling conformity score ignores information sharing between groups. More details on hierarchical models can be found in \citet[Chapter~5]{Gelman2013}. We point out that we can select a separate coverage level $\alpha_j$ for each group, which will be useful when group sizes $n_j$ vary - we provide a demonstration of this in the Appendix. Computation of $\sigma_{i,j}$ is again straightforward, where MCMC now returns $[\theta_{1:J}^{(1:T)},\phi^{(1:T)},\tau^{(1:T)}] \sim \pi(\theta_{1:J},\phi,\tau \mid Z)$. We can then estimate (\ref{eq:group_posterior_pred}) using AOI importance sampling as in (\ref{eq:IS_sum}) and (\ref{eq:IS_weights}) using the marginal samples $\{\theta_j^{(1:T)},\tau^{(1:T)}\} \sim \pi(\theta_j,\tau \mid Z)$ and weights $w^{(t)}  = f_{\theta_j^{(t)}, \tau^{(t)}}(y \mid X_{n_j+1,j})$.

In the above, we consider predictive intervals within groups with covariates, extending the within-group approach of \cite{Dunn2020}. Predictive intervals for new groups are possible with the Bayesian model, but a conformal predictor would require additional stronger assumptions of exchangeability to ensure validity. The relevant assumptions and methods based on pooling and subsampling for new group predictions are discussed in \citet{Dunn2020}, but would require rerunning MCMC in our case. We leave this for future work, noting that utilizing the Bayesian predictive density directly here seems nontrivial due to different group sizes.

\section{Experiments}\label{sec:experiments}
We run and time all examples on an Azure NC6 Virtual Machine, which has 6 Intel Xeon E5-2690 v3 vCPUs and a one-half Tesla K80 GPU card. We use PyMC3 \citep{Salvatier2016} for MCMC and \texttt{sklearn} \citep{Pedregosa2011} for the regular conformal predictor; both are run on the CPU. Computation of the CB and Bayes intervals is implemented in JAX \citep{Jax2018}, and run on the GPU.  The code is available online\footnote{\href{https://github.com/edfong/conformal_bayes}{\url{https://github.com/edfong/conformal_bayes}}} and further examples are  provided in the Appendix.

\subsection{Sparse Regression}\label{sec:sparse_reg}
We first demonstrate our method under a sparse linear regression model on the diabetes dataset  \citep{Efron2004} considered by \citet{Lei2019}. The dataset is available in \texttt{sklearn}, and consists of $n = 442$ subjects, where the response variable is a continuous diabetes progression and the $d = 10$ covariates consist of patient readings such as blood serum measurements. We standardize all covariates and the response to have mean 0 and standard deviation 1.

The Bayesian model we consider is
\begin{equation}\label{eq:priors}
\begin{aligned}
    f_\theta(y \mid x) = \mathcal{N}(y \mid \theta^\T x + \theta_0, \tau^2)\\
    \pi(\theta_j) = \text{Laplace}(0,b), \quad \pi(\theta_0) \propto 1, \quad \pi(b) = \text{Gamma}(1,1)&\quad  \pi(\tau)  = \mathcal{N}^+(0,c)
    \end{aligned}
\end{equation}
for $j = 1,\ldots,d$, and where $b$ is the scale parameter and $\mathcal{N}^+$ is the half-normal distribution. Note that a hyperprior on $b$ has removed the need for cross-validation that is required for lasso. We consider two values of $c$ for the hyperprior on $\tau$, which correspond to a well-specified ($c = 1$) and poorly-specified $(c = 0.02)$ prior; in the latter case our posterior on $\tau$ will be heavily weighted towards a small value. This model is well-specified for the diabetes dataset \citep[Chapter~4.5]{Jansen2013} under a reasonable prior ($c = 1$). We compute the central $(1-\alpha)$ credible interval from the Bayesian posterior predictive CDF estimated using Monte Carlo and the same grid as for CB. 

To check coverage, we repeatedly divide into a training and test dataset for $50$ repeats, with $30\%$ of the dataset in the test split. We evaluate the conformal prediction set on a grid of size $n_{\text{grid}} = 100$ between $[y_{\text{min}} -2,y_{\text{max}}+2]$, where $y_{\text{min}},y_{\text{max}}$ is computed from each training dataset. The average coverage, length and run-times (\textit{excluding} MCMC) with standard errors are given in Table \ref{tab:diab} for $\alpha = 0.2$. MCMC induced an average overhead of $21.9$s for $a = 1$ and $26.8$s for $c = 0.02$ for the Bayes and CB interval, where we simulate $T = 8000$ posterior samples. The CB intervals are only slightly slower than the Bayes intervals, and still a small fraction of the time required for MCMC, and is thus an efficient  post-processing step. For $c = 1$, the Bayesian intervals have coverage close to $(1-\alpha)$ with the smallest expected length, with CB slightly wider and more conservative. However, when the prior is  misspecified with $c = 0.02$, the Bayes intervals severely undercover, whilst the CB coverage and length remain unchanged from the $c = 1$ case.  

As baselines, we compare to the split and full conformal method using the non-Bayesian lasso as the predictor, with the usual residual as the nonconformity score.  For the split method, we fit lasso with cross-validation on the subset of size  $n_{\text{train}}/2$ to obtain the lasso penalty $\lambda$. For the full conformal method, we use the grid method for fair timing, as other estimators beyond lasso would not have the shortcut of \cite{Lei2019}. As setting a default $\lambda = 1$ gives poor average lengths, we estimate $\lambda = 0.004$ on cross-validation on one of the training sets, and use this value over the 50 repeats. However, we must emphasize again that this is somewhat misleading, as discussed in Section \ref{sec:motivation}. A fairer approach would involve fitting lasso with CV for each of the 100 grid values and 133 test values, but this is infeasible as each fit requires around 80ms, resulting in a total run-time of 17 minutes. On the other hand, the AOI scheme of CB is equivalent to refitting $b$ for each grid/test value. In terms of performance, the split method has wider intervals than CB/full, but performs well given the low computational costs. The full conformal method performs as well as CB, but is comparable in time as MCMC + CB, whilst not refitting $\lambda$. We note that the value of $c$ does not affect the split/full method.

\begin{table}[!h]
\begin{center} 
\caption{Diabetes; Coverage values \textit{not} within 3 standard errors (in brackets) of the target coverage $(1-\alpha) = 0.8$ are in {\color{red}\textbf{red}}.}
\label{tab:diab}
\begin{tabular}{c c c c c c}
\hline
&& Bayes & CB & Split&Full ($\lambda =0.004$)\\
\hline 
 Coverage &$c = 1$& {0.806} (0.005) & {0.808} (0.006) &  0.816 (0.006)&{0.808} (0.006)\\ 
 &$c = 0.02$&{\color{red}{\textbf{0.563}}} (0.006) & {0.809} (0.006) &  /&/\\
    \hline
 Length &$c = 1$& 1.84 (0.01) & 1.87 (0.01) & 1.95 (0.02)&1.86 (0.01)\\ 
 &$c = 0.02$& 1.14 (0.00) & 1.87 (0.01) & /& /\vspace{0.5mm}\\
    \hline
 \rule{0pt}{0.8\normalbaselineskip} Run-time & $c = 1$& 0.488 (0.107)& 0.702 (0.019)&   0.065 (0.001)&11.529 (0.232)\\
(secs)  &$c =0.02$ &0.373 (0.002)&0.668 (0.003)&  /&/\\
\end{tabular}
\end{center}
\end{table}\vspace{-2mm}

\subsubsection{Importance weights}
For the diabetes dataset, we look at the effective sample size (ESS) of the self-normalized importance weights (\ref{eq:IS_weights}), which can be computed as $\text{ESS}= 1/\sum_{t = 1}^T \{w^{(t)}\}^2$ for each $x_{n+1}$ and $y$. The ESS as a function of $y$ for a single $x_{n+1}$ is shown in Figure \ref{fig:ESS} for the two cases $c = 1,0.02$, with the CB conformal bands given for $\alpha = 0.2,0.5$. We have scaled the ESS plots by $\text{ESS}_{\text{MCMC}}/ T$, where $T=8000$ is the number of posterior samples and $\text{ESS}_{\text{MCMC}}$ is the minimum ESS out of all posterior parameters return by PyMC3.  We observe the ESS is well behaved and stable across the range of $y$ values.  In both cases, the ESS for $\alpha = 0.2$ is sufficiently large for a reliable estimate of the conformity scores. However, for $c = 0.02$, the ESS decays more quickly with $y$ as the Bayes predictive intervals are too narrow, which the CB corrects for. Other values of $x_{n+1}$ produce similar behaviour.

 \begin{figure}[!h]
\centering
 \includegraphics[width=0.9\textwidth]{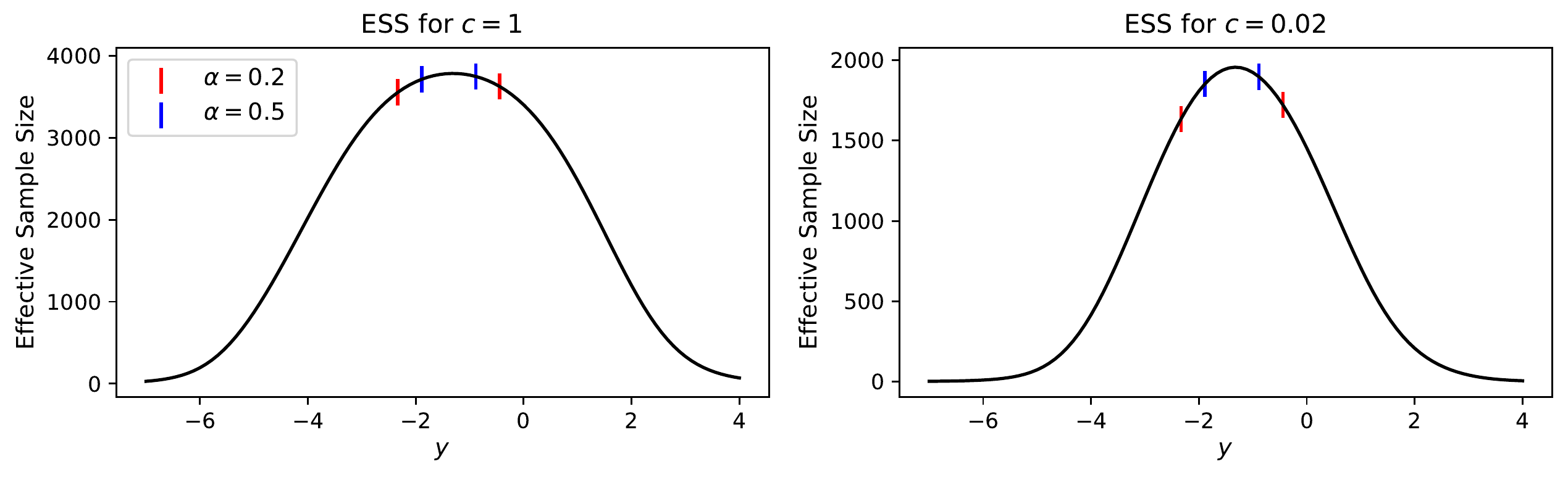}
\caption{Effective sample sizes of IS weights with CB conformal bands for diabetes dataset with (left) $c = 1$ and (right) $c = 0.02$.}\label{fig:ESS}
\end{figure}

\subsection{Sparse Classification}\label{sec:sparse_class}
In this section, we analyze the Wisconsin breast cancer \citep{Wolberg1990}, again available in \texttt{sklearn}. The dataset is of size $569$, where the binary response variable corresponds to a malignant or benign tumour. The $30$ covariates consist of measurements of cell nuclei. Again, we standardize all covariates to have mean 0 and standard deviation 1.

We consider the logistic likelihood
$
f_\theta(y=1 \mid x) = [1+ \exp\left\{-\left(\theta^\T x + \theta_0 \right) \right\}]^{-1},
$ 
with the same priors for $\theta,\theta_0$ as in (\ref{eq:priors}). The Bayesian predictive set is the smallest set from $\{0\},\{1\},\{0,1\}$ that contains at least $(1-\alpha)$ of the posterior predictive probability. The conformal baselines are as above but with $L_1$-penalized logistic regression, and for the full conformal method we have $\lambda = 1$. We again have 50 repeats with 70-30 train-test split, and set $\alpha = 0.2$. The grid method is now exact, and the size of the CB intervals can take on the values $\{0,1,2\}$. The results are provided in Table \ref{tab:breast_cancer}, where MCMC required an average of $45.4$s to produce $T = 8000$ samples.  We see that even with reasonable priors, Bayes can over-cover substantially, which CB corrects in roughly the same amount of time as it takes to compute the usual Bayes interval. 
However, we point out that CB may produce empty prediction sets, whereas Bayes cannot, and we investigate this in the Appendix.

\begin{table}[!h]
\begin{center}
\caption{Breast Cancer;  Coverage values \textit{not} within 3 standard errors (in brackets) of the target coverage $(1-\alpha) = 0.8$ are in {\color{red}\textbf{red}}. ``Size'' denotes the average number of elements in the conformal prediction set, averaged over the test points and repetitions.}\label{tab:breast_cancer}
\begin{tabular}{c c c c c}
\hline
& Bayes & CB &  Split& Full\\
\hline 
 Coverage & {\color{red}{\textbf{0.990}}} (0.001) & 0.812 (0.005)& {0.814} (0.006)&{0.811} (0.005)\\
    \hline \vspace{0.5mm}
 Size & 1.06 (0.00) & 0.81 (0.00) &0.82 (0.01)&0.81 (0.00)\vspace{0.5mm}\\
    \hline \vspace{0.5mm}  
  \rule{0pt}{0.8\normalbaselineskip} Run-time (secs) &  0.364 (0.007)& 0.665 (0.012)& 0.079 (0.002)& 1.008 (0.016)\\
\end{tabular}
\end{center}
\end{table}\vspace{-2mm}

\subsection{Hierarchical Model}
We now demonstrate Bayesian conformal inference using a hierarchical Bayesian model for multilevel data. We stick to the varying intercept and varying slope model \citep{Gelman2013}, that is for $j = 1,\ldots,J$:
\begin{equation}\label{eq:hier_model}
\begin{aligned}
    f_{\theta_j,\tau}(y_{i,j}) = \mathcal{N}(y_{i,j} \mid \theta^\T_j X_{i,j} +\theta_{0,j} , \tau^2)\\
    \pi(\theta_j) = \mathcal{N}(\phi, s^2), \quad \pi(\theta_{0,j}) = \mathcal{N}(\phi_0, s_0^2) \quad
    \\
    \end{aligned}
\end{equation}
with hyperpriors $\mathcal{N}(0,1)$ on the location parameters $\phi,\phi_0$ and $\text{Exp}(1)$ on the standard deviations $s,s_0,\tau$. We now apply this to a simulated example, and an application to the radon dataset of \cite{Gelman2006} is given in the Appendix.

We consider two simulation scenarios, with $J = 5$ groups and  $n_j = 10$ elements per group:
\begin{enumerate}
    \item Well-specified: We generate group slopes $\theta_{j} \iid \mathcal{N}(0,1)$ for $j = 1,\ldots, J$. For each $j$, we generate $X_{i,j} \sim \mathcal{N}(0,1)$ and $Y_{i,j} \sim \mathcal{N}\left(\theta_j X_{i,j},1 \right)$.
    \item Misspecified: We generate group slopes and variances $\theta_{j} \iid \mathcal{N}(0,1), \tau_j \iid \text{Exp}(1)$ for $j = 1,\ldots, J$. For each $j$, we generate $X_{i,j} \sim \mathcal{N}(0,1)$ and $Y_{i,j} \sim \mathcal{N}\left(\theta_j X_{i,j},\tau_j^2 \right)$.
\end{enumerate}
The first scenario has homoscedastic noise between groups as assumed in the model (\ref{eq:hier_model}) whereas the second scenario is heteroscedastic between groups.  To evaluate coverage, we only draw $\theta_{1:J},\tau_{1:J}$ once (and not per repeat), giving us the values
$$
\theta_{1:J} = [ 1.33, -0.77, -0.32, -0.99, -1.07], \quad \tau_{1:J} = [1.24, 2.30, 0.76, 0.28, 1.11].
$$

For each of the $50$ repeats, we draw $n_j = 10$ training and test data points from each group using the above $\theta_{1:J}$ (and $\tau_{1:J}$ for scenario 2), and report test coverage and lengths within each group. We use a grid of size 100 between $[-10,10]$. The group-wise average lengths and coverage are given in Table \ref{tab:hier_sim} again with $\alpha = 0.2$. Again run-times are given post-MCMC, where MCMC required an average of $90.1$s and $78.4$s for scenarios 1 and 2 respectively to generate $T = 8000$ samples. The Bayes interval is again the central $(1-\alpha)$ credible interval. The CB and Bayes methods have comparable run-times, likely due to the small $n$. As a reference, fitting a linear mixed-effects model in \texttt{statsmodels} \citep{Seabold2010} to the dataset takes around 200ms, so the full conformal method, which requires refitting for each of the 100 grid value and 50 test values, would take a total of 17 minutes. For scenario 1, both Bayes and CB provide close to $(1-\alpha)$ coverage, with the Bayes lengths being smaller. This is unsurprising, as the Bayesian model is well-specified. In scenario 2, the Bayes intervals noticeably over/under-cover depending on the value of $\tau_{1:J}$ in relation to the Bayes posterior mean $\bar{\tau} \approx 1.3$. CB is robust to this, adapting its interval lengths accordingly (in particular for Groups 2 and 4) and providing within-group validity.

\begin{table}[!h]
\begin{center}
\caption{Simulated grouped dataset;  Coverage values \textit{not} within 3 standard errors (in brackets) of the target coverage $(1-\alpha) = 0.8$ are in {\color{red}{\textbf{red}}}.}
\label{tab:hier_sim}
\begin{tabular}{c c c c c  c c}
&&\multicolumn{2}{c}{Scenario 1}&\multicolumn{2}{c}{Scenario 2}\\
\hline
&Group& Bayes & CB &  Bayes & CB\\
\hline \vspace{0.5mm}
 Coverage&1 &0.808 (0.020)  & 0.794 (0.022)&0.826 (0.020)  & 0.786 (0.025) \\ 
    &2 & 0.800 (0.019)&  0.812 (0.024)&{\color{red}{\textbf{0.522}}} (0.027)&  0.812 (0.024) \\ 
    &3 & 0.824 (0.017)& 0.824 (0.022)&{\color{red}{\textbf{0.974}}} (0.008)& 0.824 (0.020) \\ 
    &4 &0.786 (0.017)& 0.798 (0.022)&{\color{red}{\textbf{1.000}}} (0.000)& 0.836 (0.021) \\ 
    &5 &0.772  (0.019)& 0.810 (0.020)&0.826  (0.022)& 0.796 (0.022)  \vspace{0.5mm}\\ 
    & Overall & 0.798 (0.009)& 0.808 (0.009)&  {\color{red}{\textbf{0.830}}} (0.010)& 0.811 (0.009)\vspace{0.5mm}\\
    \hline \vspace{0.5mm}
 Length&1 & 2.80 (0.05)& 3.19 (0.13)& 3.65 (0.08)& 4.01 (0.17)\\ 
    &2 & 2.76 (0.05)& 3.21 (0.15)& 3.61 (0.08)& 7.27 (0.33) \\ 
    &3 & 2.75 (0.04)&3.07 (0.13)& 3.59 (0.08)& 2.28 (0.09) \\ 
    &4 &  2.75 (0.05)& 3.05 (0.12)&3.57 (0.08)& 1.23 (0.04)  \\ 
    &5 &  2.78 (0.05)& 3.14 (0.11)&3.61 (0.08)& 3.47 (0.12) \vspace{0.5mm}\\ 
    & Overall &2.77 (0.04)& 3.13 (0.06) &3.61(0.08)& 3.65 (0.09)  \vspace{0.5mm}\\
    \hline\vspace{0.5mm}
    \rule{0pt}{0.8\normalbaselineskip}  Run-time (secs) &Overall&0.222 (0.002) & 0.381 (0.009)& 0.221 (0.002)& 0.375 (0.002)
\end{tabular}
\end{center}
\end{table}\vspace{-5mm}

\section{Discussion}\label{sec:discussion}

In this work, we have introduced the AOI importance sampling scheme for conformal Bayesian computation, which allow us to construct frequentist-valid predictive intervals from a baseline Bayesian model using the output of an MCMC sampler. This extends naturally to the partially exchangeable setting and hierarchical Bayesian models.

Under model misspecification, or the $\mathcal{M}$-open scenario \citep{Bernardo2009}, CB can produce calibrated intervals from the Bayesian model. In the partially exchangeable case, CB can remain valid within groups. We find that even under reasonable priors, Bayesian predictives can over-cover, and CB can help reduce the length of intervals to get closer to nominal coverage. Diagnosing Bayesian miscalibration is in general non-trivial, but CB automatically corrects for this. When posterior samples of model parameters are available, AOI importance sampling is only a minor increase in computation, and interestingly is much faster than the split method which would require another run of MCMC. For the frequentist, CB intervals enjoy the tightness of the full conformal method, for a single expensive fit with MCMC followed by a cheap refitting process. We are also free to incorporate prior information, and use more complex likelihoods or priors, as well as automatically fitting hyperparameters. 

There are however limitations to our approach, dictated by the realities of MCMC and IS. Firstly, the intervals are approximate up to MC error and reliant on representative MC samples not disrupting exchangeability of the conformity scores. The stability of AOI importance sampling also depends on the posterior predictive being a good proposal, which may break down if the addition of the new datum $\{y,X_{n+1}\}$ has very high leverage on the posterior. 

If only approximate posterior samples are available, e.g. through variational Bayes (VB), then an AOI scheme may still be feasible, where one includes an additional correction term in the IS weights for the VB approximation, e.g. in \cite{Magnusson2019}. However, this remains to be investigated. Combining this with the Pareto-smoothed IS method of \citet{Vehtari2015} may lead to additional scalability with dimensionality. In our experience, CB intervals tend to be a single connected interval, which may allow for computational shortcuts in adapting the search grid. It would also be interesting to pursue the theoretical connections between the Bayesian and CB intervals, in a similar light to \cite{Burnaev2014}.

\section*{Acknowledgements}
Fong is funded by The Alan Turing Institute Doctoral Studentship, under the EPSRC grant EP/N510129/1. Holmes is supported by The Alan Turing Institute, the Health Data Research, U.K., the Li Ka Shing Foundation, the
Medical Research Council, and the U.K. Engineering and Physical
Sciences Research Council through the Bayes4Health programme grant.

\bibliographystyle{apalike}
\bibliography{refs}

\newpage
\appendix
\section{Bayesian and Frequentist Intervals}
Bayesian predictive intervals are conditioned on the specific observed sequence $Z_{1:n}$ and make statements on the next value  $[Y_{n+1} \mid X_{n+1}]$. In contrast, a conformal (frequentist) interval relates to the properties of intervals returned by the algorithm if run repeatedly across different data sets of size $n$. Subjective Bayesian statements on predictions are non-refutable, and are in this sense unscientific, but are optimal according to decision theoretic foundations. Meanwhile, frequentist statements are in principle verifiable and hence refutable.

To us, in the hands of an expert analyst with careful prior elicitation, the Bayesian conditional argument is the more persuasive for posterior and predictive uncertainty. The Bayesian predictive provides statements of uncertainty conditional on what has been observed, and so decisions pertain to each specific dataset. However, to make such strong statements, the Bayesian must usually make the strict assumption of the model being well-specified. If we wish to ensure that the predictive coverage of reported intervals is calibrated on average across repeats under weaker assumptions, then the conformal intervals are much more suitable. More details contrasting probability and confidence can be found in \citet[Section 2.2]{Shafer2008}.

At the end of the day, the Bayes and frequentist answer different questions, and the common confusion arises when treating them as answering the same. As long as we are aware they are addressing different needs, we believe both solutions are informative and useful, and indeed that is our recommendation in this paper. 

\section{Derivation of IS weights for Hierarchical Models}
We provide a quick derivation for the importance weights in Section \ref{sec:group_conf} to estimate
\begin{equation*}
p(Y_{i,j} \mid X_{i,j}, \bar{Z}_y) = \int f_{\theta_j,\tau}(Y_{i,j} \mid X_{i,j})\,  \pi(\theta_j,\tau \mid \bar{Z}_y) \, d\theta_j \, d\tau 
\end{equation*}
from posterior samples $[\theta_{1:J}^{(1:T)}, \tau^{(1:T)}, \phi^{(1:T)}] \sim \pi(\theta_{1:J},\tau,\phi \mid Z)$. We can write the above as
\begin{equation*}
\begin{aligned}
p(Y_{i,j} \mid X_{i,j}, \bar{Z}_y) =&\int f_{\theta_j,\tau}(Y_{i,j} \mid X_{i,j})\,  \pi(\theta_{1:J},\phi,\tau \mid \bar{Z}_y) \, d\theta_{1:J}\, d\phi \, d\tau  \\
&= \int f_{\theta_j,\tau}(Y_{i,j} \mid X_{i,j})\,  \frac{\pi(\theta_{1:J},\phi,\tau \mid \bar{Z}_y)}{\pi(\theta_{1:J},\phi,\tau \mid Z)}\pi(\theta_{1:J},\phi,\tau \mid Z) \, d\theta_{1:J} d\phi \, d\tau 
\end{aligned}
\end{equation*}
where
\begin{equation*}
    \frac{\pi(\theta_{1:J},\phi,\tau \mid \bar{Z}_y)}{\pi(\theta_{1:J},\phi,\tau \mid Z)} \propto f_{\theta_j,\tau}(y \mid X_{n_j+1,j}).
\end{equation*}
As the importance weight only depends on $\theta_j,\tau$, we only require the marginal posterior samples $[\theta_j^{(1:T)}, \tau^{(1:T)}]$ and we have that
\begin{equation*}
\begin{aligned}
    \hat{p}(Y_{i,j} \mid X_{i,j}, \bar{Z}_y) = \sum_{t = 1}^T w^{(t)}f_{\theta_j^{(t)}, \tau^{(t)}}(Y_{i,j}\mid X_{i,j})\\
w^{(t)} = f_{\theta_j^{(t)},\tau^{(t)}}(y \mid X_{n_j+1,j}), \quad
\widetilde{w}^{(t)} = \frac{w^{(t)}}{\sum_{t'=1}^T w^{(t')}} \cdot
\end{aligned}
\end{equation*}
\newpage
\section{Datasets, Licenses and Societal Impact}
We demonstrate our examples on 5 datasets, namely the the diabetes dataset  \citep{Efron2004}, the Boston housing dataset \citep{Harrison1978}, the Wisconsin Breast cancer dataset \citep{Wolberg1990}, the Parkinson's dataset \citep{Little2008} and the Radon dataset \citep{Gelman2006}. The first 3 datasets are available in \texttt{sklearn}, the Parkinson's dataset can be found on the UCI machine learning repository\footnote{\href{https://archive.ics.uci.edu/ml/datasets/Parkinson's}{\url{https://archive.ics.uci.edu/ml/datasets/Parkinson's}}} \citep{Dua2019}, and the Radon dataset is available on Andrew Gelman's website\footnote{\href{http://www.stat.columbia.edu/~gelman/arm/examples/radon/}{\url{http://www.stat.columbia.edu/~gelman/arm/examples/radon/}}}. Details on data acquisition is provided in the relevant references. We verified that the datasets do not contain personally identifiable information or offensive content by manual checking. The package \texttt{sklearn} is distributed under the 3-Clause BSD license. JAX and PyMC3 are both distributed under the Apache License, V2. 

The conformal method relies on the weak assumption of exchangeability. In terms of negative societal impacts, it may be tempting to apply the method blindly to real world problems without challenging this assumption as it seems quite weak. Applications where calibration is very important but data is not exchangeable would then be at risk.

\section{Additional Experiments}
\subsection{Experimental Details}
For all experiments, we repeat train-test splits or simulations 50 times, where the 70-30 train-test splits are random. For each repeat, we compute the average coverage and lengths for the test set. Means and standard errors are then computed from the 50 test set average coverages/lengths. For all MCMC examples, we generate $T = 8000$ samples, with $4000$ tune steps for sparse regression/classification and $8000$ for the hierarchical example.

\subsection{Sparse Regression}
\subsubsection{Diabetes}
We repeat analysis on the diabetes dataset, but this time with priors
\begin{equation}\label{eq:priors2}
\begin{aligned}
    f_\theta(y \mid x) &= \mathcal{N}(y \mid \theta^\T x + \theta_0, \tau^2)\\
    \pi(\theta_j) = \text{Normal}(0,d), \quad \pi(\theta_0) &= \text{Normal}(0,d),  \quad \pi(\tau)  = \mathcal{N}^+(0,1)
    \end{aligned}
\end{equation}
where we have different values $d = 5, 0.001$ which corresponds to weak and strong regularization towards 0.  For the baselines, we instead use ridge regression, with and without cross-validation for split/full as before. We emphasize that it is not exactly a fair comparison for the $d = 0.001$ case, as the baselines are tuning the parameter $\lambda$, whereas CB is subject to the misspecified prior. We still include them as baselines however, but highlight that they are not affected by the value of $d$.

MCMC required 30.8s and 13.2s for $d = 5$ and $d = 0.001$ respectively. The effect on coverage of setting $d = 0.001$ is not as detrimental as before as seen in Table \ref{tab:diab2}, as the posterior on $\tau$ compensates by increasing in value; the posterior mean is $\bar{\tau} = 1$ for $d = 0.001$ versus $\bar{\tau} = 0.71$ for $d =5$.

\begin{table}[!h]
\begin{center}
\caption{Diabetes; Coverage values \textit{not} within 3 standard errors (in brackets) of the target coverage $(1-\alpha) = 0.8$ are in {\color{red}\textbf{red}}.}
\label{tab:diab2}
\begin{tabular}{c c c c c c}
\hline
&& Bayes & CB & Split&Full ($\lambda =0.004$)\\
\hline 
 Coverage &$d = 5$&0.805 (0.005)&0.809 (0.005) &  0.816 (0.006)&0.809 (0.005)\\ 
 &$d = 0.001$&{\color{red}{\textbf{0.779}}} (0.006) & {0.809} (0.006) & /&/\\
    \hline
 Length &$d = 5$& 1.85 (0.01) &1.86 (0.01) & 1.94 (0.02)&1.86 (0.01)\\ 
 &$d = 0.001$& 2.56 (0.01)& 2.60 (0.01) &  /& / \vspace{0.5mm}\\
    \hline
 \rule{0pt}{0.8\normalbaselineskip} Run-time & $d = 1$& 0.417 (0.002)&0.677 (0.003)&    0.024 (0.000)&8.409 (0.007)\\
(secs)  &$d =0.02$ &0.540 (0.116)&0.692 (0.008)&  /&/\\
\end{tabular}
\end{center}
\end{table}

\newpage
\subsubsection{Boston Housing}
The Boston housing dataset \citep{Harrison1978} is of size $n = 506$, consisting of $d = 13$ predictors relating to housing such as demographic and air quality, with the response as the median value of  owner-occupied homes. 
We use the same Bayesian model as in (\ref{eq:priors}), again considering $c = 1,0.02$. For $c = 1$, the model is already misspecified for the Boston housing dataset as the errors are non-normal and have heavy tails \citep{Jansen2013}. All experimental settings are the same as in Section \ref{sec:sparse_reg}.\\

MCMC required an average of $22.8$s and $24.4$s for $c = 1,0.02$ to produce $T = 8000$ posterior samples. Again, in Table \ref{tab:boston} we see similar behaviour to the diabetes dataset case, but we note that even under $c = 1$, the Bayesian model over-covers. This is likely due to the presence of heavy tails in the residuals, leading to more conservative Bayesian predictive intervals. Here for $c = 1$, CB attains very close to nominal coverage and has a noticeably smaller average length. For $c=0.02$, CB is not affected much but the Bayes interval under-covers.

\begin{table}[!h]
\begin{center}
\caption{Boston; Coverage values \textit{not} within 3 standard errors (in brackets) of the target coverage $(1-\alpha) = 0.8$ are in {\color{red}\textbf{red}}.}
\label{tab:boston}
\begin{tabular}{c c c c c c}
\hline
&& Bayes & CB & Split & Full ($\lambda = 0.004$) \\
\hline
 Coverage &$c = 1$& {\color{red}\textbf{0.860}} (0.004) & 0.800 (0.005) & 0.805 (0.006) &0.799 (0.005)\\ 
 &$c = 0.02$&{\color{red}\textbf{0.728}} (0.005)& 0.799 (0.005)& /&/\vspace{0.5mm}\\
    \hline \vspace{0.5mm}
 Length &$c = 1$& 1.35 (0.01) & 1.12 (0.01) & 1.22 (0.02)& 1.12 (0.01)\\
 &$c = 0.02$ &0.96 (0.00) & 1.13 (0.01) & /& / \vspace{0.5mm}\\
    \hline \vspace{0.5mm}
  \rule{0pt}{0.8\normalbaselineskip} Run-time &$c = 1$ & 0.414 (0.003)&0.746 (0.011)&0.061 (0.000)&12.448 (0.042)\\
 (secs)  &$c =0.02$ & 0.406 (0.003)& 0.744 (0.003)& /&/
\end{tabular}
\end{center}
\end{table}
\subsection{Sparse Classification}

\subsubsection{Parkinson's Disease}
We provide an additional demonstration on the Parkinson's dataset \citep{Little2008}, which consists of $n =195$ voice recordings (after removing missing data) of patients with or without Parkinson's disease encoded in the binary response. The covariates consist of $d = 22$ different voice recording properties.\\

The experimental setup is identical to Section \ref{sec:sparse_class}, and MCMC required 29.2s to produce $T = 8000$ samples. Again, in Table \ref{tab:park} we see that Bayes over-covers even for reasonable priors, and CB produces tighter intervals that are closer to nominal coverage.

\begin{table}[!h]
\begin{center}
\caption{Parkinson's; Coverage values \textit{not} within 3 standard errors (in brackets) of the target coverage $(1-\alpha) = 0.8$ are in {\color{red}\textbf{red}}. ``Size'' denotes the average number of elements in the conformal prediction set, averaged over the test points and repetitions.}
\label{tab:park}
\begin{tabular}{c c c c c}
 \hline
& Bayes & CB &  Split&Full\\
\hline
 Coverage & {\color{red}\textbf{0.955}} (0.004) & 0.815 (0.008)&{\color{red}\textbf{0.842}} (0.010)&0.816 (0.008) \vspace{0.5mm}\\
    \hline \vspace{0.5mm}
 Size & 1.31 (0.01) & 0.93 (0.01) &1.05 (0.02)&0.95 (0.01)\vspace{0.5mm}\\
    \hline \vspace{0.5mm}
  \rule{0pt}{0.8\normalbaselineskip} Times &  0.203 (0.003)&  0.379 (0.008)& 0.478 (0.057)&0.168 (0.003)\\
\end{tabular}
\end{center}
\end{table}
\newpage
\subsubsection{Uninformative Predictions}

As the Bayesian model returns $p := p(y =1 \mid x, Z_{1:n})$, we compute $(1-\alpha)$ predictive sets by returning the smallest set of $\{0\},\{1\},\{0,1\}$ such that it contains at least $(1-\alpha)$ of the predictive probability mass. In other words, we return:
\begin{equation}
\begin{cases}
\{0\} \quad &\text{if }(1-p) \geq (1-\alpha)\\
\{1\}\quad &\text{if } p \geq (1-\alpha)\\
\{0,1\}\quad &\text{if } \max\{(1-p),p\} \leq (1-\alpha)\\
\end{cases}
\end{equation}
As this process is quite conservative, it is unsuprising that Bayes overcovers. The set $\{0,1\}$ is clearly uninformative, as it is always correct. On the other hand, the conformal sets can take on the empty set $\{\}$ as well, which we know to be incorrect. As discussed in \cite{Melluish2001}, empty set predictions correspond to being unable to make a prediction at the desired confidence level. \cite{Shafer2008} discusses the notion of \textit{confidence} and \textit{credibility}, which correspond to the greatest $(1-\alpha)$ such that the conformal set is of size 1 and the greatest $\alpha$ such that the conformal set is empty respectively.

We can decompose the informative and uninformative predictive sets and look at the misclassification rate, which is the error percentage within single element predictions. In comparison, we can look at the percentage of uninformative predictives (either both elements or empty). This is shown in Tables \ref{tab:misclass_rates}, \ref{tab:uninf_rates}, where the target coverage is $(1-\alpha) =0.8$ as before. For the breast cancer dataset, CB has a very small misclassification rate, but almost 19\% of all prediction sets are empty and 0\% are both, so the coverage is attained by making either single correct predictions or empty ones. Bayes on the other consists of more misclassifications but fewer uninformative predictions, but the attained coverage is a much higher value of 0.99. For the Parkinson's dataset, CB makes very few uninformative predictions, but has a relatively high misclassification rate.  Bayes on the other hand is very conservative, with 31\% uninformative predictions, hence the high average length and over-coverage. It is interesting to note the two sorts of behaviours attained by CB, which likely depends on the Bayesian model that was used to construct the CB intervals. 

In Figure \ref{fig:test_p}, we see the distributions of $p_i:= p(y_i \mid x_i,Z)$ of the Bayesian model with the corresponding CB interval length. We see that for CB intervals of length 1, the values of $p_i$ tend to be heavily skewed towards $0$ or $1$, which corresponds to the Bayesian model being strongly predictive. For empty CB intervals, in both cases the probability mass is distributed away from $0$ and $1$; for the breast cancer dataset it is evenly distributed on $(0,1)$ whereas for Parkinson's it is concentrated around $0.5$. When given a CB interval of length 0, it may be more informative to actually return the value $p_i$, which is the corresponding Bayesian prediction. 

\begin{table}[!h]
\begin{center}
\caption{Misclassification rates}\label{tab:misclass_rates}
\begin{tabular}{c   c  c }
\hline
Dataset &Bayes & CB \\
\hline 
 Breast Cancer  & 0.011 (0.001)& 0.002 (0.000) \\
 Parkinson's & 0.064 (0.006)&0.124 (0.004) \\
\end{tabular}
\end{center}
\end{table}

\begin{table}[!h]
\begin{center}
\caption{Uninformative Rates}\label{tab:uninf_rates}
\begin{tabular}{c   c c cc}
&\multicolumn{2}{c}{Both}&\multicolumn{2}{c}{Empty}\\
\hline
Dataset  &Bayes & CB&Bayes & CB \\
\hline 
 Breast Cancer  &0.059 (0.002) & 0.000 (0.000)& 0 &0.186 (0.005) \\
 Parkinson's  & 0.312 (0.009)&0.003 (0.001)&0&0.070 (0.006)\\
\end{tabular}
\end{center}
\end{table}

 \begin{figure}[!h]
\centering
 \includegraphics[width=0.95\textwidth]{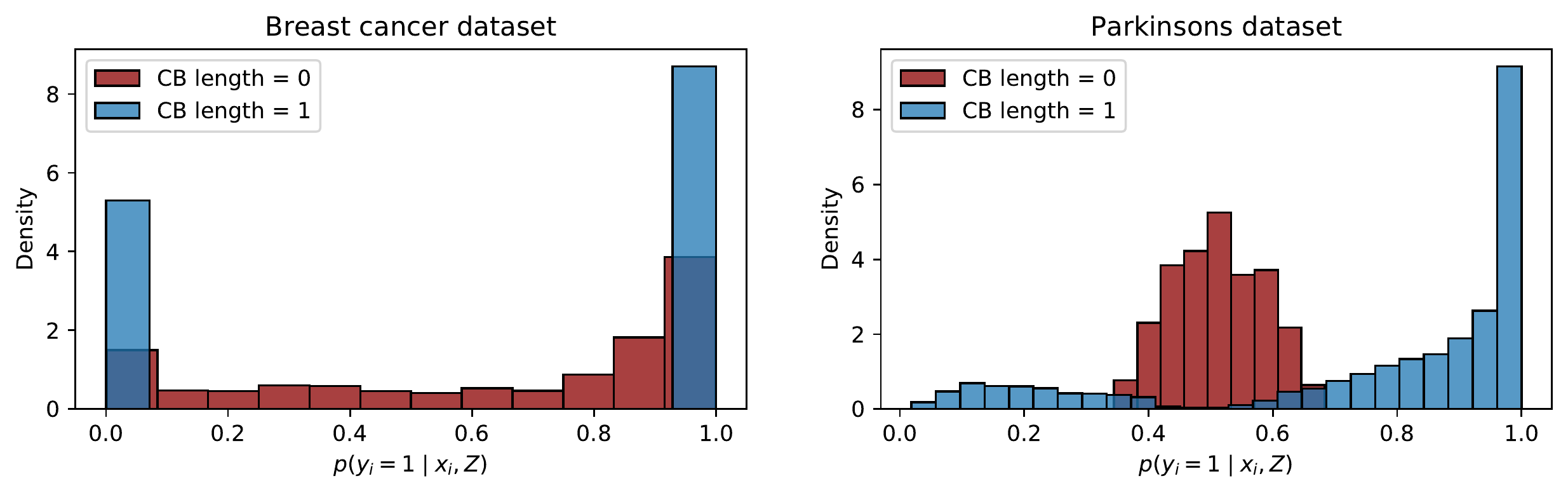}
\caption{Distribution of $p(y_i \mid x_i,Z)$ on test data for CB intervals of length $1$ or $0$ for breast cancer (left) and Parkinson's (right).}\label{fig:test_p}
\end{figure}

\newpage
\subsection{Hierarchical}
\subsubsection{Radon} 
Using the same model as in (\ref{eq:hier_model}), we analyze the radon dataset\footnote{We base this example on the PyMC3 notebook here: \href{https://docs.pymc.io/notebooks/multilevel_modeling.html}{\url{https://docs.pymc.io/notebooks/multilevel_modeling.html/}}}, introduced in  \citet[Chapter~12]{Gelman2006}. The dataset consists of 919 home radon levels in Minnesota, where the covariate is the location of measurement, with $x = 0$ corresponding to basement and $x = 1$ to the first floor. The groups are the 85 counties in which the homes are located, and vary significantly in group size. Around half of the counties contain $n_j \leq 4$ measurements, with the smallest county containing one value and the largest containing 119.

As many of the group sizes are quite small, we do not repeat train-test splits and evaluate coverage. Instead, we compare the CB and Bayes intervals on the entire dataset for different floor values $x$ and counties, and discuss the effects of $n_j$ on the choice of $\alpha_j$. 
As each $x \in \{0,1\}$, we specify $x_{\text{test}}$ as all possible group indicators and predictors, resulting in $85 \times 2 = 170$ test values. For the predictive intervals, we use a grid of size 100 between [-6,6]. MCMC for the radon example required around 156s, and computing the 170 predictive intervals took 0.65s and 2.69s for Bayes and CB respectively, where we have excluded the first run compilation time for JAX.

As we need $\alpha_j \geq {1}/({n_j+1})$ to get intervals that are not the entire real line, we set $\alpha_j = {1.1}/({n_j+1})$ (for numerical reasons) and compare the Bayes and CB intervals. The average CB length is 2.66 compared to 2.17 for the Bayes intervals, noting that we are averaging over all possibilities instead of the distribution of $x_{\text{test}}$. In Figure \ref{fig:radon_big}, we plot $\pi_j(y)$ for the two value of $x \in \{0,1\}$ for two groups. For the group size $n_j = 4$, we see that $\pi_j(y) \geq 0.2$, so any $\alpha < 0.2$ would return us the real line as the confidence set. For $n_j = 52$, the ranks are much smoother, giving us more resolution in the confidence sets with respect to $\alpha$. In Figure \ref{fig:radon_small}, we show the rank plots for $n_j = 1$, which only contain the ranks $\{0.5,1\}$. Interestingly, for county 41, $x = 1$ returns the empty set for $\alpha \geq 0.5$ and the real line for $\alpha < 0.5$, which is a consequence of the small group size. CB is able to return non empty sets for county 49 with $\alpha \geq 0.5$. All CB sets appear to be connected.

 \begin{figure}[!h]
\centering
 \includegraphics[width=0.95\textwidth]{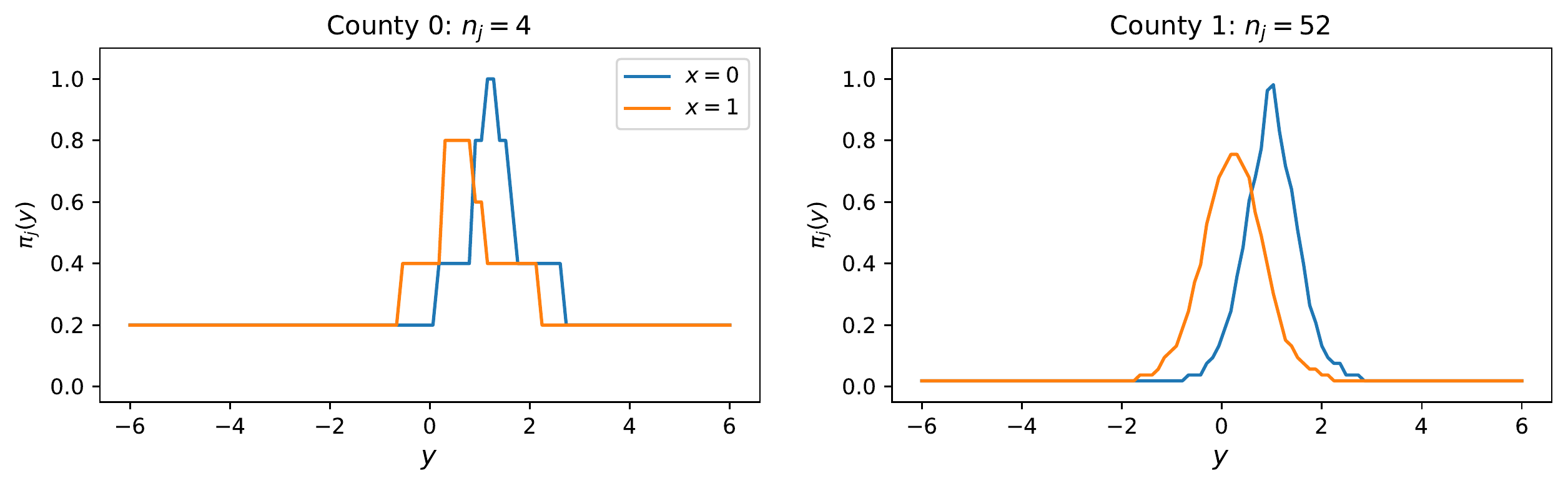}
\caption{Plot of rank $\pi_j(y)$ for $x \in \{0,1\}$ with $n_j = 4$ (left) and $n_j = 52$ (right).}\label{fig:radon_big}
\end{figure}

 \begin{figure}[!h]
\centering
 \includegraphics[width=0.95\textwidth]{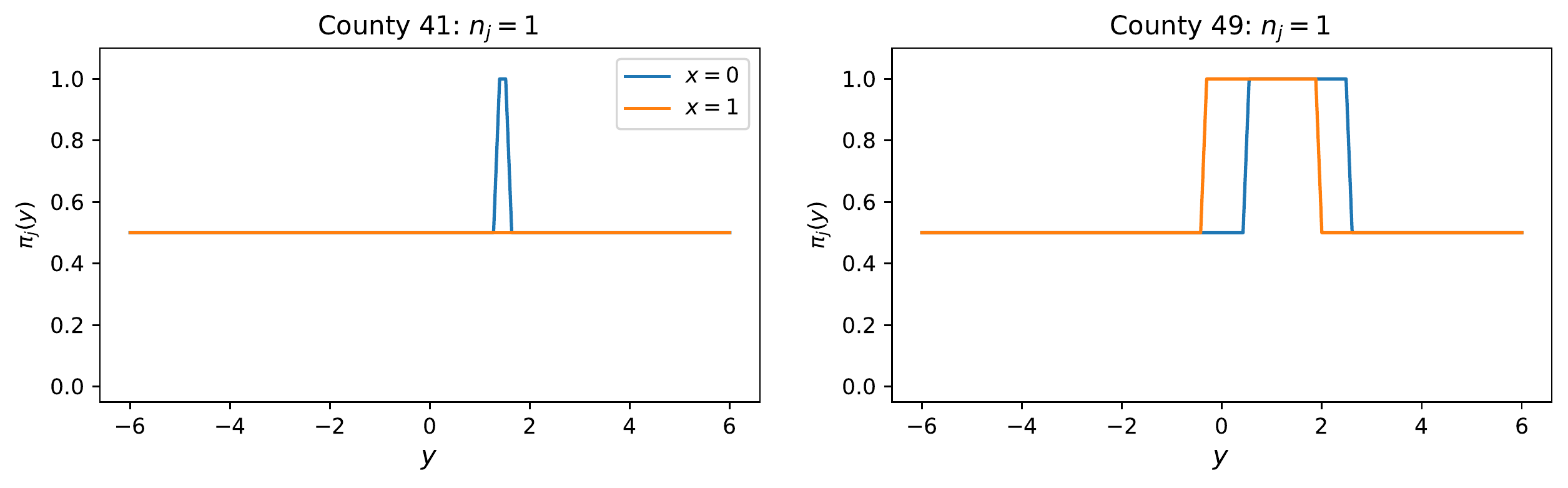}
\caption{Plot of rank $\pi_j(y)$ for $x \in \{0,1\}$ with $n_j = 1$ for two groups.}\label{fig:radon_small}
\end{figure}

As a reference, fitting a linear mixed-effects model in \texttt{statsmodels} \citep{Seabold2010} to the whole dataset takes around 600ms, so the full conformal method, which would require refitting for each of the 100 grid value and 170 test values, would require 170 minutes in total. On the other hand, CB requires much less time and has similar group structure.

\newpage
\subsection{MCMC Times}
In Table \ref{tab:mcmc}, we report the average times (and standard errors) for running MCMC on the NC6 virtual machine. We point out that the times for the hierarchical methods are longer as we needed to increase the tuning steps and acceptance probability to prevent divergences in the chains. 

\begin{table}[!h]
\begin{center}
\caption{Run-time in seconds for MCMC}\label{tab:mcmc}
\begin{tabular}{c   c }
\hline
Dataset & MCMC \\
\hline 
 Diabetes $(c = 1)$ \hspace{3.5mm} & 21.868 (0.135)\\ 
 Diabetes $(c =0.02)$& 26.790 (0.365) \\
 Diabetes $(d = 5)$&30.825 (0.214)\\
 Diabetes $(d = 0.001)$&13.166 (0.072)\\
 Boston  $(c = 1)$ \hspace{3.5mm} & 22.827 (0.036)  \\  
 Boston $(c =0.02)$ &24.362 (0.429)\\
 \\
 Breast Cancer &  45.418 (0.804)\\
 Parkinson's &29.239 (0.302) \\
 \\
  Scenario 1 & 90.109 (1.605)\\
 Scenario 2 &78.403 (1.544) \\
 Radon &156.087 (0.000) 
\end{tabular}
\end{center}
\end{table}
\newpage
\subsection{Grid effects}
To quantify the grid effects, we also compute the coverage by directly evaluating $\pi(Y_{n+1})$ for each test point and checking if it satisfies condition (\ref{eq:conformal_set}). Of course in practice this is not possible as we do not observe $Y_{n+1}$.

 For the grid conformal method, we compute the $y \in \mathcal{Y}_{\text{grid}}$ that is nearest to $Y_{n+1}$, and report $0$ or $1$ if this grid value is in the conformal prediction set. Note that this implementation of the grid method can both under and over cover. Denote $\delta$ as the resolution of the grid, and the smallest grid value in the conformal prediction set as $a$. If $a - \delta <Y_{n+1}< a - \delta/2$, we may incorrectly reject $Y_{n+1}$ if it is truly in the set and $a-\delta$ is not. Similarly, if $a - \delta/2 <Y_{n+1}< a$ we can incorrectly accept if $Y_{n+1}$ is not actually in the set but $a$ is. Note that the estimated average length is also affected by this.

We compare the grid and exact method in Tables \ref{tab:diab_grid}, \ref{tab:boston_grid}, \ref{tab:hier_grid}. The largest discrepancy in average coverage is only $0.008$, which is quite negligible. However, we expect this discrepancy to increase as $|\mathcal{Y}_{\text{grid}}|$ decreases.

\begin{table}[!h]
\begin{center}
\caption{Diabetes; Grid versus exact coverage, with target $(1-\alpha)= 0.8$}\label{tab:diab_grid}
\begin{tabular}{c c c c }
\hline
&& CB Grid & CB  Exact \\
\hline
 Coverage & $c = 1$ &0.808 (0.006)&0.810 (0.005)\\ 
 & $c= 0.02$ &0.809 (0.006)&0.810 (0.006)\\ 
\end{tabular}
\end{center}
\end{table}

\begin{table}[!h]
\begin{center}
\caption{Boston;  Grid versus exact coverage, with target $(1-\alpha)= 0.8$}\label{tab:boston_grid}
\begin{tabular}{c c c c }
\hline
&& CB Grid & CB  Exact \\
\hline
 Coverage & $c = 1$ &0.800 (0.005) & 0.800 (0.005)\\ 
 & $c = 0.02$ &0.799 (0.005)& 0.799 (0.005)
\\ 
\end{tabular}
\end{center}
\end{table}

\begin{table}[!h]\label{tab:hier_sim_grid}
\begin{center}
\caption{Simulated grouped dataset;  Grid versus exact coverage, with target $(1-\alpha)= 0.8$}\label{tab:hier_grid}
\begin{tabular}{c c c c c  c c}
&&\multicolumn{2}{c}{Scenario 1}&\multicolumn{2}{c}{Scenario 2}\\
\hline
&Group& CB Grid & CB Exact &CB Grid & CB Exact\\
\hline \vspace{0.5mm}
 Coverage&1 & 0.794 (0.022)&0.786 (0.023)  & 0.786 (0.025)&0.790 (0.024) \\ 
    &2 &  0.812 (0.024)&0.816 (0.023)&  0.812 (0.024)& 0.818 (0.023) \\ 
    &3 & 0.824 (0.022)&0.820 (0.022)& 0.824 (0.020) &0.824 (0.020)\\ 
    &4 & 0.798 (0.022)&0.796 (0.021)& 0.836 (0.021)&0.838 (0.022) \\ 
    &5 & 0.810 (0.020)&0.812 (0.019)& 0.796 (0.022)& 0.792 (0.022)  \vspace{0.5mm}\\ 
\end{tabular}
\end{center}
\end{table}

\end{document}